\newtheorem{proposition}{Proposition}
\newtheorem{theorem}[proposition]{Theorem}
\newtheorem{corollary}[proposition]{Corollary}
\theoremstyle{definition}
\newtheorem{problem}[proposition]{Problem}
\numberwithin{equation}{section}
\numberwithin{proposition}{section}
\begin{document}

\title[Obamacare and a Fix for the IRS Iteration]{Obamacare and a Fix for the IRS Iteration}


\author[S. J. Ferguson]{Samuel J. Ferguson}
\address[S. J. Ferguson]{Metron, Inc., Ste 600, 1818 Library St, Reston, VA 20190}
\email{sjf370@nyu.edu}

\maketitle

\section{Introduction}

In 2018, I took an Uber ride. Although my driver qualified for help with paying for health insurance under the Affordable Care Act, he couldn't determine the amount of his benefit. Worse, tax software and government calculators said he should receive \$0 to help him pay for health insurance, instead of the roughly \$3,000 that the law seemed to prescribe. He asked me to look into the matter, and my efforts led to a mathematical odyssey captured by \emph{Time}'s film crew and a senior writer at \emph{Money} magazine in an online film clip and article \cite{TIM}. I am motivated to publish my findings by a communication \cite{OCC} that the IRS will include reference to it in its guidance after publication in a peer-reviewed journal. Then, tax software companies will be able to implement procedures proposed here without legal liability, relieving the current computational issues affecting Affordable Care Act beneficiaries. I am also motivated by the opportunity to bear witness to the resolution of a civic concern by means of mathematical modeling and proof.

\section{Obamacare's Premium Tax Credit}

A tax device created by the Affordable Care Act plays a key role in my driver's problem, so we first review this law. In 2010, the United States Congress passed the Patient Protection and Affordable Care Act \cite{ACA}, also called Obamacare. A couple of its provisions are relevant. First, it provides for the setup of online exchanges, so American households can directly purchase health insurance meeting certain minimum standards. These standards apparently give rise to the ``patient protection'' part of the law's name. Second, the law makes qualified health insurance affordable for every American household with \emph{household income}\footnote{We postpone giving definitions of ``affordable'' and ``household income,'' but we note that the latter may be referred to as the household's modified adjusted gross income (MAGI) in the literature, hence our choice of the letter ``M'' to denote it. Worksheets for calculating $M$ may be found in the Instructions for Form 8962 \cite[p~6]{IF8962}.} $M$ in the range
\[
F\leq M\leq 4F.
\]
Here, $F$ is the \emph{federal poverty line}\footnote{The value of $F$ used in Obamacare calculations for a given tax year may be found in the Instructions for Form 8962 for that year \cite[pp 6--7]{IF8962}. For example, in the continental United States in 2018, for a household with $n$ people, $F$ is approximately $\text{\$8,000}+n\cdot\text{\$4,000}$ \cite[p 7]{IPTC18}. Thus, at that time, a household of one person had a federal poverty line of about \$12,000, and a household of four people had a federal poverty line of about \$24,000.} for the household, a governmentally-prescribed number depending on household size and state which adjusts annually according to a specified notion of inflation.

Now, how does the law make health insurance affordable when $F\leq M\leq 4F$? It does so by creating a tax credit to help eligible households pay the premiums of qualified health insurance. For such households, and for at least some choices of health plan, the credit pays all of the cost of the insurance premiums except for a portion which is considered affordable. Moreover, this credit may be received in advance, to help pay the cost of premiums right away, and is refundable, so the full credit is receivable whether or not the household owes taxes that offset it.

Having introduced Obamacare's tax credit for premiums, how can we find it? Following along with Form 8962 \cite{F8962}, which taxpayers must file to claim the premium tax credit, we see that the computation requires an \emph{applicable figure} $f$. The applicable figure represents the percentage of $M$ which is affordable for the household to pay for health insurance. Working with decimals rather than percentages, we model $f$ as a governmentally-determined function of $m$, where
\[
m=M/F.
\]
Here, the division is exact, so $m$ is a real number in $[1,4]$ obtained without rounding when $F\leq M\leq 4F$. If $M>4F$, then household income is too high to receive the premium tax credit (PTC), so $PTC=\$0$. When $M<F$ we again have $PTC=\$0$ unless the household qualifies for an exception \cite[p 8]{IF8962}, in which case the applicable figure $f(1)$ is used. Thus, applicable figures aren't needed for $m$ outside $[1,4]$.

\begin{figure}
  \includegraphics[width=\linewidth]{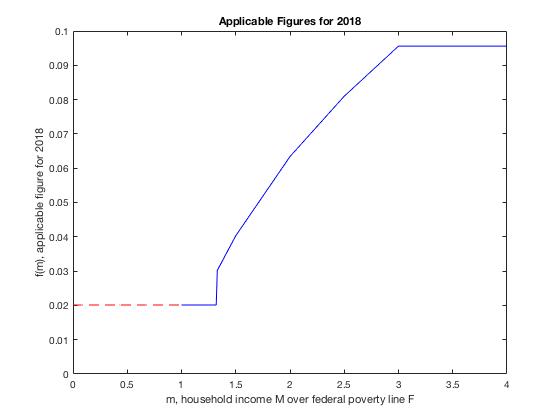}
  \caption{A graph of the 2018 applicable figure $f(m)$ as a function of $m$. For $m<1$, the value of the dashed line is $f(1)$. This value is used if an applicable figure is required in that case. No applicable figure is needed when $m>4$.}
  \label{fig:ApplicableFiguresGraph}
\end{figure}
\emph{Example.} Say we are considering the 2018 tax year. Then the applicable figure is appropriately modeled\footnote{The values $j=f(1)$, $k=f(1.33)$, $\ell=f(1.5)$, $a=f(2)$, $b=f(2.5)$, $c=f(3)$ are from Table 2 in the Instructions for Form 8962 for 2018 \cite[p 9]{IPTC18}. They are found by locating applicable figures for household income equal to 100\%, 133\%, 150\%, 200\%, 250\%, and 300\% of the federal poverty line, respectively. That our function $f(m)$ is an appropriate model for the rest of the applicable figures in the table may be seen by noting that, given $n\%$ on the right column of Table 2, rounding the value of $f(\frac{n}{100})$ to the nearest ten-thousandth yields the corresponding decimal on the left column. To model the applicable figures for 2019, say, we use the same functional form with the values $j=0.0208$, $k=0.0311$, $\ell=0.0415$, $a=0.0654$, $b=0.0836$, $c=0.0986$ instead \cite[p 9]{IF8962}.} by defining $f(m)$ by
\[
f(m)=\begin{cases}
j, & 1\leq m < 1.33,\\
k+(\ell-k)\tfrac{m-1.33}{1.5-1.33}, & 1.33\leq m<1.5,\\
\ell+(a-\ell)\tfrac{m-1.5}{2-1.5}, & 1.5\leq m< 2,\\
a+(b-a)\tfrac{m-2}{2.5-2}, & 2\leq m< 2.5,\\
b+(c-b)\tfrac{m-2.5}{3-2.5}, & 2.5\leq m < 3,\\
c, & 3\leq m\leq 4,
\end{cases}
\]
where
\[
(j,k,\ell, a, b, c)=(0.0201, 0.0302, 0.0403, 0.0634, 0.0810, 0.0956).
\]

As Figure \ref{fig:ApplicableFiguresGraph} shows, our function has a discontinuity at $m=1.33$ and values in the interval $(0,0.1)$. Other tax years call for different values for $f(m)$. For all tax years, however, this function is defined similarly to the above example, is monotone increasing, and, by the grace of Congress, possesses right continuity on $[1,4]$.

Once we have the applicable figure, the remaining ingredients for the computation of the premium tax credit are readily obtained. The household's \emph{expected contribution}, found by multiplying the applicable figure $f(m)$ by household income $M$, is what the government expects the household to be able to affordably contribute towards health insurance. The expected contribution $f(m)\cdot M$ is compared with the premium $P$, the sum of the unsubsidized or ``sticker price'' costs of benchmark annual\footnote{If enrollees, plans, or premiums change from month to month, then monthly premiums must be recorded individually. And, in certain circumstances, costs are split between multiple tax returns. We assume, for simplicity, that neither of these occur.} health insurance premiums for the household members. The government is willing to ``pick up the tab,'' that is, pay all of the cost of the benchmark premiums which is not covered by the expected contribution. Thus, the government can pay the remaining amount, $P-f(m)\cdot M$. More precisely, as the government's contribution is never negative, it pays up to $\max\left(0, P-f(m)\cdot M\right)$.

\emph{Simplified Example.} Say an unmarried 60-year-old nonsmoker forms a household of $1$ person in Dutchess County, in the state of New York, in 2018. Assume their benchmark premium is $\$500$ per month, $F=\ $\$12,000, $M=\ $\$48,000, and $f(4)=0.09$. Then, their benchmark annual health insurance premium is
\[
P=12\cdot \text{\$500} = \text{\$6,000},
\]
and their expected contribution is
\[
f\left(m\right)\cdot M = 0.09\cdot \text{\$48,000} = \text{\$4,320}.
\]
If they buys the benchmark insurance, then the government pays the rest, which is
\[
P-f(m)\cdot M=\text{\$6,000}-\text{\$4,320}=\text{\$1,680}.
\]
Thus, \$1,680 is the amount of Obamacare's premium tax credit for the household.

We recall that the premium tax credit can be taken in advance. After doing this, the remaining balance on the benchmark\footnote{The benchmark insurance for a household member is the ``second lowest cost silver plan'' on the government exchange for the household's county of residence, and depends on the age and smoking habits of the enrollee, in addition to the county and tax year.} premiums can be paid with the expected contribution. As the government considers the expected contribution affordable, we have seen how, according to government definitions, Obamacare makes health coverage affordable\footnote{In particular, if all available employer-sponsored health insurance plans require an employee contribution that exceeds the household's expected contribution $f(m)\cdot M$, so they are ``unaffordable,'' then the household may generally purchase suitable insurance on the exchange, and receive a premium tax credit, to get affordable qualified insurance.} when $M$ satisfies $F\leq M\leq 4F$.

A household need not buy the benchmark insurance to receive a premium tax credit, however, and can purchase other qualified insurance from the exchange. Let $Q$ denote the sum of the unsubsidized annual costs of such qualified insurance, for health plans actually purchased for the household members. The government is still willing to contribute $\max\left(0,P-f(m)\cdot M\right)$ or the full cost $Q$ of the chosen qualified insurance, if this is less, since the government cannot pay more than the full cost. Thus, in general, the premium tax credit is given by
\[
PTC=\min\left(Q,\max\left(0,P-f(m)\cdot M\right)\right).
\]
Enrollees receive a form\footnote{Forms 1095-A list annual values for $Q$ and $P$ on lines 33A and 33B, respectively, for purchasers \cite{F8962}. They list annual values of advance payments of premium tax credits, $APTC$, on line 33C.} giving the values they should use for $Q$ and $P$ when filing taxes. To recap, the instructions for claiming the premium tax credit give the value of $f(m)$ from tables once $M$ is known. Then, using $Q, P$ read off a form, the premium tax credit $PTC$ is calculated by the above formula. With such a precise process available, why was my Uber driver unable to find his premium tax credit?

\section{The Problem}

We have yet to define or calculate household income $M$. There's a reason for that. For households with income from \emph{self-employment}---an independent contractor, a private tutor, and a driver associated with a ridesharing app are all likely to be considered self-employed---the value of $M$ can be tricky to find, particularly if the household is eligible for a premium tax credit. In 2014, self-employed workers were ``almost three times more likely'' than other workers to obtain health insurance from the government exchanges created by Obamacare, according to the Treasury \cite{LMT}, so self-employed households form a sizable proportion of beneficiary households. Thus, we are motivated to address any computational issues they may face; such issues could potentially affect a large number of people.

Pinning down the household income $M$ for self-employed households like that of my Uber driver requires detective work. This is because self-employed households are eligible for a tax deduction $D$ involving health insurance costs, which may be difficult to determine when those costs are being shared with the government. Rather than discussing household income\footnote{The Instructions for Form 8962 \cite[p 6]{IF8962} say that household income in general is the sum of the modified adjusted gross incomes of the household members. Each modified adjusted gross income is the sum of the adjusted gross income (AGI) on the corresponding tax return and certain tax-exempt income. The AGI itself is obtained from total or gross income by subtracting certain adjustments, called ``above the line'' deductions.} in general, we just discuss it for households whose sole income source is self-employment in a single business. Let us denote the \emph{earned income}\footnote{Earned income is defined to be the net profit from a business minus the self-employment tax deduction---corresponding to the half of Social Security and Medicare taxes normally paid by employers---and tax-deductible self-employed retirement plan contributions.} generated from this activity by $I$. If the household's health insurance is all purchased on the exchange by this business, then some nonnegative amount $D$ of that cost can be deducted\footnote{We assume, for simplicity, that earned income $I$ is at least as large as $Q$. Otherwise, we must also require that the self-employed health insurance deduction $D$ satisfy $D\leq I$.} from $I$, so taxes are only paid on the amount $I-D$. If there are no other sources of ``above the line\footnote{The origin of the name is that on Form 1040 \cite{F1040} for 2017 and many prior years, AGI appears at the bottom of the front page, with a line underneath. The deductions needed to compute AGI---the ``above the line'' deductions---appear above that line, while all other deductions do not.}'' deductions besides $D$ and the ones used to compute $I$, then the household income $M$ is given by
\[
M=I-D.
\]
We say that the household has a ``simple'' tax return in this case, since it only has one income source and one above the line deduction besides those used to find $I$.

For simple tax returns, the premium tax credit can be determined from $M=I-D$, and hence from $D$, but what range of values can $D$ have? We introduce two legal constraints on $D$. The first is that the government doesn't permit more to be deducted than the household was billed for during the year, so $D\leq Q$. If advance payments of premium tax credits were sent, and we denote the total amount sent by $APTC$, then the balance billed for was $Q-APTC$. So, $D\leq Q-APTC$ in this case, but for simplicity we take $APTC=\$0$ for now. The second constraint, which we call the ``no double-dipping rule,'' is that the government doesn't permit more in deductions and credits than was possible to pay. Without this restriction, an enterprising person might buy health insurance at a negative effective cost, presumably contrary to the taxpayers' wishes. In our case, we write the rule as
\[
D+PTC(D)\leq Q,
\]
where $PTC(D)$ is the amount of the premium tax credit for a household with income $M=I-D$. We can make this function explicit by replacing $m$ with $M/F$ and $M$ with $I-D$ in the equation for $PTC$ from the previous section, giving
\[
PTC(D)=\min\left(Q,\max\left(0,P-f\left(\frac{I-D}{F}\right)\cdot \left(I-D\right)\right)\right).
\]
Since the second constraint implies the first for simple tax returns with $APTC=\$0$, we ignore the first for now, and may refer to the second simply as ``the constraint.''

We now come to my Uber driver's dilemma. To find his premium tax credit, he must know $D$. But $D\leq Q-PTC(D)$, so he must know his premium tax credit to find out how large $D$ can be. But the premium tax credit is what he wanted to find in the first place! So, there is a ``circular relationship'' in the United States tax code between the premium tax credit and the self-employed health insurance deduction \cite{IRS}. This means that the Internal Revenue Service (IRS) has the following problem:

\begin{problem}
What is a procedure, computable by hand in a reasonable time, that finds the appropriate health insurance deduction $D$ for any self-employed household eligible for Obamacare's premium tax credit?
\end{problem}

What does ``appropriate'' mean? The appropriate choice of $D$ is the nonnegative value which maximizes the tax benefit for the household. The tax benefit is the sum of the tax credit\footnote{In case $m=\tfrac{I-D}{F}$ falls outside of $[1,4]$ during our analysis, we take $PTC(\cdot)$, as a function of $D$, to have the value $PTC(D)$ given by the above formula if $m$ lies in $[1,4]$, and we take $PTC(D)=0$ if $m>4$ or $m<1$, for now. However, there can be exceptions to this when $m<1$.} $PTC(D)$ and the amount of taxes saved by reducing income by the deduction $D$. If the tax function $T(\cdot)$ assigns, to a given income in dollars, the federal income tax on that value for a household, ignoring the tax credit, then the amount of taxes saved by reducing taxable income from $I$ to $I-D$ is $T(I)-T(I-D)$. With these definitions, the tax benefit is
\[
PTC(D)+\left(T(I)-T(I-D)\right).
\]
As $T(I)$ is independent of $D$, the optimal solution is unaffected by dropping it from the problem formulation. Thus, the appropriate deduction is the value of $D$ solving
\[
\max_{D+PTC(D)\leq Q}PTC(D)-T(I-D).
\]
Essentially, we want to maximize the tax credit while minimizing the tax, subject to the constraint. As an increase in $D$ causes a decrease in $I-D$, whence an increase in $PTC(D)-T(I-D)$, the largest value of the latter occurs for the largest $D$ satisfying the constraint. That is, the appropriate nonnegative value of $D$ is
\[
\max\left(\{D:D+PTC(D)\leq Q\}\right),
\]
provided $M=I-D\geq F$ for this value, so the household is eligible for the credit.

What does ``computable by hand in a reasonable time'' mean, above? Practically, it means the IRS can put it into its tax guidance. Informally, this means the IRS does not consider it overly onerous to require of a typical taxpayer with access to its instructions, even if removed from modern computing. For example, if we try all possible whole dollar values for $D$ that satisfy the constraint, then, to the nearest dollar, some value will yield the maximum tax benefit, and thus will give the appropriate $D$. But the IRS would likely consider having to try every possible constrained value $D$ to be an overly burdensome computational task for an American unable to access a computer or smartphone. Thus, although guaranteed to succeed, this is not a procedure that any household can ``compute by hand in a reasonable time.'' On the other hand, the maximization problem for $D$ can eventually be converted to an algebraic equation in $D$ for each taxpayer. This is because $f(m)\cdot M$ is a piecewise-quadratic function of $D$ and the constraint  is an analyzable inequality. However, the IRS would probably find it unreasonable to require an American removed from the internet to discover the necessary algebra and numerical computation of square roots. The task is to create an algorithm or procedure which can be implemented in a reasonable number of steps that just involve addition, subtraction, multiplication, division, and rounding. While this may be possible for the above-mentioned algebraic equations, it would likely use many specific details about the function $f(m)$, so the guidance would have to be rewritten each year using the new year's function. It would be preferable for the IRS to derive a dependable procedure that is independent of the tax year.

If we can solve the above problem for simple tax returns with $APTC=\$0$, then we can check whether our solution, appropriately generalized, handles simple tax returns with positive values of $APTC$, and general tax returns. We turn now to current IRS guidance for taxpayers who qualify for both a self-employed health insurance deduction $D$ and a premium tax credit $PTC(D)$. This guidance can be viewed as an attempted solution of the above problem.

\section{The IRS Fixed Point Iteration}

Current IRS guidance offers self-employed Obamacare beneficiaries two methods for determining allowable values of their self-employed health insurance deduction $D$ \cite[pp 62--65]{IRS974}. The second, ``simplified calculation method'' is a truncation of the first, ``iterative calculation method,'' so we focus primarily on motivating and analyzing the IRS iterative method here.

To motivate the IRS iteration for finding the appropriate self-employed health insurance deduction $D$, we ask what equations $D$ might satisfy, in plausible scenarios. Certainly, if we write the premium tax credit as a variable, $C$, we have the equation
\[
C=PTC(D)
\]
by definition. In addition, we might hope that the appropriate $D$, the largest nonnegative value satisfying $D+PTC(D)\leq Q$, attains the equality $D+PTC(D)=Q$. This means each dollar earmarked for health insurance leads to a dollar of tax credit or a dollar of insurance deductions, a plausible property for the $D$ giving the greatest tax benefit to provide. Subtracting $PTC(D)$ in this equation, we get
\[
D=Q-PTC(D),
\]
so we arrive at a system of two equations in two unknowns given by
\[
(C,D)=(PTC(D),Q-PTC(D)).
\]

Rather than try to solve the above system of two equations in two unknowns algebraically, the IRS uses the two earlier equations to define a fixed point iteration. Given $(C_{n}, D_{n})$ for some integer $n\geq 1$, we can define $(C_{n+1}, D_{n+1})$ by
\[
(C_{n+1},D_{n+1})=(PTC(D_{n}),Q-PTC(D_{n})).
\]
This indeed defines a fixed point iteration, for if we define $G$ on $\mathbb{R}^{2}$ by
\[
G(x,y)=\left(PTC(y),Q-PTC(y)\right),
\]
then the above equation for $(C_{n+1}, D_{n+1})$ becomes
\[
(C_{n+1},D_{n+1})=G(C_{n},D_{n}).
\]
As we shall see, this iteration seeks a limiting point $X=(C,D)$ such that $X=G(X)$; such a point is said to be fixed by $G$.

For an eligible household with a simple tax return and no advance premium tax credit, the IRS iterative method generally begins with the point $(C_{1}, D_{1})$ given by $C_{1}=$ \$0, $D_{1}=Q$ and defines points $(C_{n},D_{n})$ sequentially by the above equation $(C_{n+1}, D_{n+1})=G(C_{n},D_{n})$. This is the case, for example, if $I-Q\geq F$. For our model, if we define the sequence $\{(C_{n},D_{n})\}_{n=1}^{\infty}$ in this way, then we can ask about convergence. If $(C_{n}, D_{n})\to (C,D)$ as $n\to\infty$, then, taking limits on both sides of the equation $(C_{n+1}, D_{n+1})=G(C_{n}, D_{n})$, we can prove\footnote{\emph{Proof.} We first observe, inductively, that $D_{2}\leq D_{4}\leq D_{6}\leq\cdots$ when $(C_{1}, D_{1})=(\$0,Q)$. Then, we apply the left continuity of the function $PTC(\cdot)$, inherited from the right continuity of $f(\cdot)$, to the sequence $\{D_{2n}\}_{n=1}^{\infty}$. Our hypothesis that $(C_{n}, D_{n})\to (C,D)$ as $n\to\infty$ then yields $PTC(D)=\lim_{n\to\infty}PTC(D_{2n})=C$.} that
\[
(C,D)=\left(PTC(D),Q-PTC(D)\right)
\]
holds. From this, we see that $(C,D)=G(C,D)$, so $(C,D)$ is a fixed point of $G$, whence $D+PTC(D)=Q$. This shows $D$ satisfies the constraint. Moreover, as $D+PTC(D)$ is a strictly increasing function of $D$ when $m=\tfrac{I-D}{F}\geq 1$, no larger value of $D$ satisfies the constraint. Thus, $D=\max\left(\{D:D+PTC(D)\leq Q\}\right)$, that is, $\lim_{n\to\infty}D_{n}$ is the appropriate value of $D$.

Having motivated the IRS fixed point iteration with the usual notion of convergence, we point out that the IRS uses its own test to determine convergence. First, let us say that a sequence $\{(C_{n}, D_{n})\}_{n=1}^{\infty}$ \emph{converges in the IRS sense} if and only if, when rounding to the nearest penny after each intermediate calculation, there exists a positive integer $N$ such that
\[
\|(C_{k}, D_{k})-(C_{n}, D_{n})\|_{\infty}<\varepsilon_{0}
\]
for all integers $k,n\geq N$, with $\varepsilon_{0}=\$1$. The above norm is defined by $\|(x,y)\|_{\infty}=\max(|x|,|y|)$. It is straightforward to show, for the sequences $\{(C_{n}, D_{n})\}_{n=1}^{\infty}$ defined in the preceding paragraph, that the ``iterative calculation method'' amounts to taking $D$ to be the $y$-coordinate of $(C_{n_0}, D_{n_0})$ after appropriate rounding, where $n_0$ is the smallest value of $N$ that satisfies our definition of IRS convergence \cite[p 63]{IRS974}. The actual text of IRS guidance asks taxpayers to ``not use the iterative calculation method'' if $\|(C_{n+1}, D_{n+1})-(C_{n}, D_{n})\|_{\infty}\geq\varepsilon_{0}$ for all $n\geq 1$. The oscillatory nature of the sequences makes it possible to prove that this holds if and only if $\{(C_{n},D_{n})\}_{n=1}^{\infty}$ fails to converge in the IRS sense \cite[p 64]{IRS974}. Thus, for the sequences coming from its iteration, the IRS provides a simple, accurate ``do not use'' test\footnote{We note that this test for divergence is equivalent to using the well-known Cauchy criterion for this with $\varepsilon_{0}=\$1$.} for divergence.

IRS guidance also offers its simplified calculation method, which amounts to asking beneficiaries to take $D_{2}$ as their health insurance deduction and, hence, $C_{3}$ as their premium tax credit. When $\{(C_{n},D_{n})\}_{n=1}^{\infty}$ fails to converge in the IRS sense, so that we cannot use the IRS fixed point iteration, these are the values that IRS guidance currently arrives at for $D$ and $C$. The best of the tax software may extend the simplified procedure, so taxpayers take at most
\[
D_{0}:=\liminf_{n\to\infty}D_{n}
\]
as their deduction, and hence $PTC(D_{0})$ as their premium tax credit. When we don't have convergence in the IRS sense, however, $D_{0}$ is generally smaller than the appropriate value, and in many cases $PTC(D_{0})$ yields a premium tax credit of \$0. This is apparently the cause of my Uber driver's difficulty; we emphasize that these inappropriate values are what tax software and government calculators give now.

Unsurprisingly, the IRS says that self-employed taxpayers ``may have difficulty'' computing their premium tax credit, according to the IRS document which introduced the fixed point iteration \cite{IRS}. However, IRS guidance says that ``any'' computation method may be used to find the appropriate deduction, provided it respects the constraint and the separate rules for the deduction and credit \cite[p 64]{IRS974}. The below example proves that neither method given by the IRS always works to compute appropriate values, so the fact that any valid method may be used gives us a fresh opportunity to solve the IRS problem---and my Uber driver's dilemma.

\emph{Example.} Say we are considering the 2018 tax year. Then the example of an applicable figure function $f(m)$ given previously is the appropriate one to use. In particular, we can use it to calculate values of the extended tax credit function
\[
PTC(D)=\begin{cases}
\min\left(Q,\max\left(0,P-f(\frac{I-D}{F})\cdot (I-D)\right)\right), & 1\leq\frac{I-D}{F}\leq 4,\\
0, & \frac{I-D}{F}>4\text{ or }\frac{I-D}{F}<1.
\end{cases}
\]
Suppose we have a household in Brooklyn, New York, consisting of one individual and one dependent child who is less than 26 years old. The household's relevant federal poverty line is $F=\text{\$16,240}$ \cite[p 7]{IPTC18}. Looking up benchmark prices for the county, Kings County, we find that the unsubsidized cost of benchmark health insurance premiums for the household is \$865.81 per month or, rounding to the nearest dollar, $P=\text{\$10,390}$ annually \cite{NYH}. Suppose that the household, altogether, has earned self-employment income from a single business which amounts to $I=\text{\$71,150}$, and take $Q=P$. Following the IRS fixed point iteration, and rounding to the nearest dollar in intermediate steps for simplicity, we obtain
\[
(C_{1}, D_{1})=(\$0, \text{\$10,390})
\]
and
\[
C_{2}=\text{\$10,390}-0.0956\cdot\text{\$60,760},
\]
as $\text{\$71,150}-\text{\$10,390}=\text{\$60,760}$.
Hence, after rounding, $C_{2}$ is \$4,581. Thus,
\[
D_{2}=\text{\$10,390}-\text{\$4,581}=\text{\$5,809}.
\]
In turn, this makes $I-D_{2}=\text{\$65,341}>4F=\text{\$64,960}$, so by our above formula for $PTC(D)$, we get
\[
C_{3}=\$0.
\]
Unfortunately, this yields
\[
D_{3}=\text{\$10,390},
\]
putting us back where we started. Hence, the sequence doesn't converge in the IRS sense. On the other hand, if we follow the simplified calculation method from the IRS, we arrive at a deduction of $D_{2}=$\$5,809 and a premium tax credit of $C_{3}=\$0$. This is even worse than not claiming the premium tax credit at all, and letting $D=$\$10,390. It turns out that the \$0 value for the premium tax credit is not appropriate, as we shall see. If we progressively narrow our search for the deduction $D$ by performing repeated bisections, for example, then we can do better.

\section{The Bisection Method}

We now propose a bisection procedure, and prove that it always gives the appropriate self-employed health insurance deduction $D$ for simple tax returns. The proof works because, although there may, in general, be discontinuities in the underlying structures that affect potential computations, we have monotonicity and left continuity in the function $PTC(D)$, the latter inherited from the right continuity of $f(m)$ through $m=\tfrac{I-D}{F}$. We first motivate the use of bisection by adapting the well-known proof of the Intermediate Value Theorem by bisection to left continuous, monotone increasing functions.

\begin{theorem}
Let $g$ be a real-valued, monotone increasing, left continuous function on an interval $[a,b]$, and let a real number $k$ be given. If $g(a)\leq k$, then there exists a real number $c$ in $[a,b]$ such that $g(c)\leq k$ and $g(d)>k$ for all $d>c$ in $[a,b]$.
\end{theorem}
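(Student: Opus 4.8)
The plan is to adapt the classical bisection proof of the Intermediate Value Theorem, being careful that the only continuity available is from the left. First I would dispose of a degenerate case: if $g(b)\leq k$, take $c=b$, since then $g(c)\leq k$ and there is no $d>c$ in $[a,b]$, so the second condition holds vacuously. So assume from now on that $g(b)>k$; combined with the hypothesis $g(a)\leq k$, this gives a starting bracket $[a_0,b_0]=[a,b]$ satisfying $g(a_0)\leq k<g(b_0)$.

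The main construction is the bisection itself. Given a bracket $[a_n,b_n]$ with $g(a_n)\leq k<g(b_n)$, set $c_n=\tfrac{1}{2}(a_n+b_n)$ and let $[a_{n+1},b_{n+1}]=[c_n,b_n]$ if $g(c_n)\leq k$ and $[a_{n+1},b_{n+1}]=[a_n,c_n]$ otherwise. By construction the invariant $g(a_{n+1})\leq k<g(b_{n+1})$ persists, $\{a_n\}$ is non-decreasing, $\{b_n\}$ is non-increasing, $a_n\leq b_n$, and $b_n-a_n=(b-a)/2^n\to 0$. Hence both sequences converge to a common limit $c\in[a,b]$ with $a_n\leq c\leq b_n$ for every $n$.

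It then remains to check that this $c$ has the two desired properties. To see $g(c)\leq k$: if $a_n=c$ for some $n$ then $g(c)=g(a_n)\leq k$; otherwise $a_n<c$ for all $n$, so $a_n\uparrow c$ and left continuity yields $g(c)=\lim_{n\to\infty}g(a_n)\leq k$, the last inequality because every $g(a_n)\leq k$. To see the second property, let $d\in[a,b]$ with $d>c$; since $b_n\to c<d$, we have $b_n<d$ for all large $n$, and fixing such an $n$ gives $g(d)\geq g(b_n)>k$ by monotonicity.

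The step that needs the most care is establishing $g(c)\leq k$, precisely because we lack right continuity: the argument must reach $c$ as a limit from below, along the left endpoints $a_n$ where the inequality $g(a_n)\leq k$ of the invariant lives, rather than from above. The asymmetry of the hypothesis is otherwise harmless, since the strict inequality $g(d)>k$ to the right of $c$ follows from monotonicity alone, with no appeal to continuity. One could also bypass bisection entirely by taking $c=\sup\{x\in[a,b]:g(x)\leq k\}$ and arguing similarly, but since this section is organized around bisection I would present the constructive version above.
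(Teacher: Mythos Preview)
Your proposal is correct and follows essentially the same bisection argument as the paper's proof: dispose of the case $g(b)\leq k$, run bisection maintaining the invariant $g(a_n)\leq k<g(b_n)$, pass to the common limit $c$, use left continuity along the increasing left endpoints to get $g(c)\leq k$, and use monotonicity via the right endpoints to get $g(d)>k$ for $d>c$. Your treatment is slightly more careful in separating out the subcase $a_n=c$ before invoking left continuity, but otherwise the arguments are the same.
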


\begin{proof}
If $g(b)\leq k$ then, as there is no $d>b$ in $[a,b]$, we can set $c=b$. Otherwise, if $g(b)>k$, we denote the midpoint of $[a,b]$ by $c_{1}=\tfrac{a+b}{2}$. If $g(c_{1})>k$, we set $a_{1}=a$, $b_{1}=c_{1}$, and reduce our search to $[a_{1}, b_{1}]$ as, by monotonicity, $g(d)>k$ for all $d\geq c_{1}$ in $[a,b]$. If $g(c_{1})\leq k$, we set $a_{1}=c_{1}$, $b_{1}=b$, and again reduce our search to $[a_{1}, b_{1}]$. Having defined $[a_{n}, b_{n}]$ for some $n\geq 1$ such that $g(a_{n})\leq k<g(b_{n})$, we let $c_{n+1}=\tfrac{a_{n}+b_{n}}{2}$. If $g(c_{n+1})>k$, we set $a_{n+1}=a_{n}$, $b_{n+1}=c_{n+1}$, whereas if $g(c_{n+1})\leq k$ we set $a_{n+1}=c_{n+1}$, $b_{n+1}=b_{n}$. In either case, we have $g(a_{n+1})\leq k <g(b_{n+1})$. Having defined $\{[a_{n}, b_{n}]\}_{n=1}^{\infty}$ recursively as above, it follows that a unique $c$ lies in all of the intervals $[a_{n}, b_{n}]$, as $[a_{n+1}, b_{n+1}]$ is a subset of $[a_{n}, b_{n}]$ and $[a_{n}, b_{n}]$ has length $\tfrac{b-a}{2^{n}}$ for all $n\geq 1$. Note that $c$ is the limit of the increasing sequence $\{a_{n}\}_{n=1}^{\infty}$. Since $g(a_{n})\leq k$ for all $n\geq 1$, and $a_{n}\to c$ as $n\to\infty$, it follows that $g(c)\leq k$ by left continuity of $g$, as desired. Given $d>c$ in $[a,b]$, for $n$ sufficiently large we have $\tfrac{b-a}{2^{n}}<d-c$. For such $n$, we have $c\leq b_{n}<d$. As $g(b_{n})>k$ and $g$ is increasing, it follows that $g(d)>k$. Thus, as $d>c$ in $[a,b]$ was arbitrary, $c$ is as desired.
\end{proof}

We can apply the proof of this theorem to justify the appropriateness of using bisection to calculate $D$ as in the following corollary. This corollary is for simple tax returns with no advance premium tax credit. For such returns, if $I<F$, then $PTC=\$0$ automatically, so the premium tax credit cannot be taken and the optimal deduction\footnote{If $I<Q$ is possible, then the IRS requires us to further constrain $D$ by $D\leq I$. So, the highest possible deduction is actually $D=\min(I,Q)$ when $I<F$, but we have assumed $I\geq Q$ above for simplicity and will continue to do so below.} is $D=Q$. The following corollary handles the remaining case, where $I\geq F$. We recall that $D$ must lie in $[0,Q]$ and that we must have $D\leq I-F$, so that $M=I-D\geq F$, to ensure eligibility for the premium tax credit in our current setup. The tax benefit when $D>I-F$, so that $PTC=\$0$ and $D=Q$, can be considered separately. However, this poses no computational issue and the benefit is smaller than taking the premium tax credit in practical situations. That is why we only consider $D\leq I-F$ here.

\begin{corollary}
Suppose $F, P, Q>0$, $I\geq F$ are given real numbers, and $f$ is a positive, monotone increasing, right continuous function on $[1,4]$. Define $PTC(\cdot)$ on $(-\infty, I-F]$ by letting
\[
PTC(d)=\begin{cases}
\min\left(Q,\max\left(0,P-f(\frac{I-d}{F})\cdot (I-d)\right)\right), & 1\leq\frac{I-d}{F}\leq 4,\\
0, & \frac{I-d}{F}>4.
\end{cases}
\]
Then, if $PTC(0)\leq Q$, there is $D$ in $\left[0, \min\left(Q, I-F\right)\right]$ such that $D+PTC(D)\leq Q$ and $d+PTC(d)>Q$ for all $d>D$ in $\left[0, \min\left(Q, I-F\right)\right]$. Such $D$ may be computed by bisection. Letting $a_{0}=0$ and $b_{0}=\min(Q, I-F)$, this means that if $b_{0}+PTC(b_{0})\leq Q$ then $D=b_{0}$ and if $b_{0}+PTC(b_{0})>Q$ then we can proceed as follows. For each integer $n\geq 0$, having obtained $a_{n}$ and $b_{n}$ with $a_{n}+PTC(a_{n})\leq Q < b_{n}+PTC(b_{n})$, let $c_{n+1}=\tfrac{a_{n}+b_{n}}{2}$. If $c_{n+1}+PTC(c_{n+1})\leq Q$, then let $a_{n+1}=c_{n+1}$, $b_{n+1}=b_{n}$. Otherwise, if $c_{n+1}+PTC(c_{n+1})>Q$, let $a_{n+1}=a_{n}$, $b_{n+1}=c_{n+1}$. The increasing sequence $\{a_{n}\}_{n=0}^{\infty}$ defined by this procedure converges to the number
\[
D=\max\left(\left\{d\text{ \emph{in} }[0,\min(Q,I-F)]:d+PTC(d)\leq Q\right\}\right)
\]
with the property above.
\end{corollary}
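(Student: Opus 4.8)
The plan is to deduce the corollary from the previous theorem. Set $g(d) = d + PTC(d)$, $k = Q$, and take the interval to be $[a,b] = [0,\min(Q,I-F)]$; then the assertion that there is $D$ in $[0,\min(Q,I-F)]$ with $D+PTC(D)\le Q$ and $d+PTC(d)>Q$ for all $d>D$ there is exactly the conclusion of the theorem for this $g$ and $k$. That such a $D$ equals $\max\{d\in[0,\min(Q,I-F)]:d+PTC(d)\le Q\}$ is then automatic from monotonicity of $g$, since the set is downward closed and $D$ is its largest element; and the bisection written out in the corollary, started from $[a_0,b_0]=[0,\min(Q,I-F)]$, is the bisection in the proof of the theorem applied to this $g$ and $k$, so the statement about the limit of $\{a_n\}$ follows from the nested-interval argument there. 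Hence the whole job is to verify, for $g(d)=d+PTC(d)$ on $[0,\min(Q,I-F)]$, the three hypotheses of the theorem --- real-valuedness (trivial), monotone increase, left continuity --- together with $g(0)\le Q$.

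Of these, $g(0)=PTC(0)\le Q$ is the standing hypothesis. For monotonicity: on the range where $\tfrac{I-d}{F}\in[1,4]$ the factor $I-d$ is positive and strictly decreasing in $d$ while $f(\tfrac{I-d}{F})$ is nonnegative and weakly decreasing (since $f$ is increasing and $d\mapsto\tfrac{I-d}{F}$ is decreasing), so their product is weakly decreasing, whence $P-f(\tfrac{I-d}{F})\cdot(I-d)$, and then $\min(Q,\max(0,\cdot))$ of it, is weakly increasing; where $\tfrac{I-d}{F}>4$ we have $PTC\equiv 0$; and at the interface $d=I-4F$ the value of $PTC$ can only jump upward. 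So $PTC$ is increasing and $g=\mathrm{id}+PTC$ is strictly increasing. For left continuity: since $f$ is right continuous on $[1,4]$ and $d\mapsto\tfrac{I-d}{F}$ is decreasing, as $d\uparrow d_0$ the argument decreases to $\tfrac{I-d_0}{F}$, so $f(\tfrac{I-d}{F})\to f(\tfrac{I-d_0}{F})$; then $f(\tfrac{I-d}{F})\cdot(I-d)$, $P-f(\tfrac{I-d}{F})\cdot(I-d)$, and the continuous operations $\max(0,\cdot)$ and $\min(Q,\cdot)$ all pass to the limit, so $g$ is left continuous at $d_0$ --- with one exception: at $d_0=I-4F$ the left limit of $PTC$ is $0$ but $PTC(I-4F)$ may be strictly positive.

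That seam at $d=I-4F$ (the $\tfrac{I-d}{F}=4$ interface, where the two-branch definition of $PTC$ meets) is the one genuine obstacle, and I would dispose of it by cases. If $\min(Q,I-F)\le I-4F$ (in particular if $I\le 4F$), then on $[0,\min(Q,I-F)]$ the function $PTC$ is given by the single formula or is identically $0$, so $g$ is left continuous throughout and the theorem applies verbatim. Otherwise $0<I-4F<\min(Q,I-F)$, hence $I-4F\le Q$; then every $d\in[0,I-4F)$ has $g(d)=d<\min(Q,I-F)\le Q$ and so lies in the feasible set, while on $[I-4F,\min(Q,I-F)]$ the function $g$ is left continuous, the seam point now being the left endpoint where left continuity is vacuous. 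Provided $g(I-4F)\le Q$, the theorem applied on $[I-4F,\min(Q,I-F)]$ produces $D\in[I-4F,\min(Q,I-F)]$ with $g(D)\le Q$ and $g(d)>Q$ for $d>D$ there; pasting with the feasibility of all of $[0,I-4F)$ and monotonicity of $g$ makes this $D$ the maximum over the whole interval, and the corollary's bisection on $[0,\min(Q,I-F)]$ --- whose nested intervals shrink to a point and satisfy $g(a_n)\le Q<g(b_n)$ --- converges to it. The delicate residual subcase is $g(I-4F)=I-4F+PTC(I-4F)>Q$: then $\{d:g(d)\le Q\}=[0,I-4F)$ has no maximum, and I would note that this occurs only in the degenerate regime $P>Q+4F\,f(4)$ with income only just above $4F$, where the operative limiting deduction is $D=I-4F$ with premium tax credit $0$; in every non-degenerate instance the maximum exists and the bisection converges to it.
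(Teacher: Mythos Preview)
Your approach is exactly the paper's: set $g(d)=d+PTC(d)$ on $[0,\min(Q,I-F)]$, take $k=Q$, and invoke the preceding theorem. The paper's proof is just two sentences to that effect and does not pause to verify the hypotheses on $g$.

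You go further, and in doing so you have put your finger on a genuine gap that the paper glosses over. The theorem requires $g$ to be left continuous on the whole interval, and your observation that left continuity can fail at the seam $d=I-4F$ is correct: for $d\uparrow I-4F$ we have $PTC(d)=0$, while $PTC(I-4F)=\min\bigl(Q,\max(0,P-4F\,f(4))\bigr)$ may be strictly positive. Your case split---restricting to $[I-4F,\min(Q,I-F)]$ when the seam lies in the interior, where $g$ \emph{is} left continuous and the theorem applies, then pasting with the trivially feasible piece $[0,I-4F)$---is the right repair, and your identification of the residual degenerate subcase $I-4F+PTC(I-4F)>Q$ (which forces $P>Q+4F\,f(4)$ and $I$ only just above $4F$) is accurate: there the feasible set is $[0,I-4F)$, the maximum genuinely fails to exist, and the corollary as literally stated does not hold. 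The paper's short proof does not address this; your treatment is the more careful one.

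One small remark: the hypothesis $PTC(0)\le Q$ that you invoke as ``standing'' is in fact automatic, since $PTC(d)\le Q$ for every $d$ by construction.
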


\begin{proof}
Let $a=0$, $b=\min\left(Q, I-F\right)$, and define $g$ on $[a,b]$ by $g(d)=d+PTC(d)$. By the preceding theorem, if $PTC(a)\leq Q$, then $D$ in $[a,b]$ with the desired property exists. The proof of that theorem justifies the rest of the assertion.\end{proof}

If the function $g$ in the above proof is such that $g(0)>Q$, then no $D$ with the desired property exists, but this is not a problem in our application as $PTC=\$0$ in this case and we can take $D=Q$. If $g(0)\leq Q$ and $g(b)\leq Q$ too, where $b=\min(Q,I-F)$, then the appropriate $D$ must be $D=b$. In all remaining cases, we can perform repeated bisections of $[0, b]$ as described above.

\emph{Example.} We can perform the bisection procedure on the example from the preceding section. After more than a dozen bisections, rounding to the nearest dollar after each step, we find that the appropriate deduction is $D=\$$6,208. From this, we find that the appropriate premium tax credit for the household is $PTC(D)=\$$4,182, substantially more than the \$0 it would receive by the simplified calculation method. Given $D$, and without performing the bisection, it is readily checked that this value cannot be improved because $D+\text{\$4,182}=Q$. By means of such checking, the IRS apparently verifies the correctness of tax returns prepared using values of $D$ found by methods outside of its guidance, such as bisection.

Similarly to the above example, when $PTC(D)=P-f(m)\cdot M$ and $Q\geq f(4)\cdot 4F$, there is generally an interval of incomes $I\leq 4F+f(4)\cdot 4F$ for which the IRS fixed point iteration breaks down. In this case, the simplified calculation method gives $PTC=\$0$, yet an appropriate deduction $D$ can be found by bisection yielding substantial premium tax credits, often worth thousands of dollars.

The bisection method also offers improvement over IRS guidance near $m=1.33$, as the discontinuity in $f(m)$ at $1.33$ again prevents IRS convergence nearby. In fact, it is possible for the equation $D+PTC(D)=Q$ to have no solution for some interval of incomes $I$, due to the discontinuity at $m=\tfrac{I-D}{F}=1.33$. For such $I$, there is a value of $D$ such that $D+PTC(D)<Q$ yet $d+PTC(d)>Q$ for $d>D$ due to a discontinuous jump in $PTC(d)$ as $d$ approaches $D$ from the right. Thus, this $D$ is the appropriate value for the deduction, yet $D+PTC(D)<Q$. In our model of the IRS fixed point iteration, if $\{(C_{n}, D_{n})\}_{n=1}^{\infty}$ converges to $(C, D)$, then $D+PTC(D)=Q$ necessarily follows. For this reason, in this interval of incomes $I$, it is impossible for the IRS fixed point iteration to converge; it can be shown that this is true no matter how we select the initial point $(C_{1}, D_{1})$.

%

\section{The Advance Premium Tax Credit}

Having developed the bisection procedure for simple tax returns with no advance premium tax credit, we consider the situation where advance payments of the premium tax credit are sent. Let $APTC$ denote the amount of advanced payments in this case, and let us treat \$0 as $0$. Then, $D$ must lie in $[0,Q-APTC]$, but otherwise the computation of the premium tax credit may begin as before, starting from the interval $[0,Q-APTC]$. It is usually the case that, if a taxpayer with $APTC>0$ receives the advance payments based on the expectation that $m=\tfrac{I-D}{F}\geq 1$ will occur, then the taxpayer may still claim the premium tax credit even if $m<1$ actually occurs \cite[p 8]{IF8962}. In such cases of unstable self-employment income, we can use the equation for $PTC(D)$ that is given when $1\leq m\leq 4$, using the actual values of $I$ and $D$ but using $f(1)$ as the applicable figure, as mentioned previously. We obtain the following corollary after taking $f(m)=f(1)$ for $m$ in $[0,1]$. This is the result with the widest practical application in this paper.

\begin{corollary}
Suppose $F, P, Q>0$, $I\geq Q$, and $0<APTC\leq Q$ are given real numbers, and $f$ is a positive, monotone increasing, right continuous function on $[0,4]$. Define $PTC(\cdot)$ on $(-\infty, I]$ by letting
\[
PTC(d)=\begin{cases}
\min\left(Q,\max\left(0,P-f(\frac{I-d}{F})\cdot (I-d)\right)\right), & 0\leq\frac{I-d}{F}\leq 4,\\
0, & \frac{I-d}{F}>4.
\end{cases}
\]
Then, if $PTC(0)\leq Q$, there is $D$ in $[0, Q-APTC]$ such that $D+PTC(D)\leq Q$ and $d+PTC(d)>Q$ for all $d>D$ in $[0, Q-APTC]$. Moreover, the number
\[
D=\max\left(\left\{d\text{ \emph{in} }[0,Q-APTC]:d+PTC(d)\leq Q\right\}\right)
\]
with the property above can be computed by bisection, starting from $a_{0}=0$ and $b_{0}=Q-APTC$, as in the previous corollary.
\end{corollary}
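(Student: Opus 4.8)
The plan is to deduce the corollary from the Intermediate Value Theorem for monotone increasing, left continuous functions proved above, in exactly the same way as the preceding corollary, but over the interval $[0,Q-APTC]$. Concretely, I would set $a=0$, $b=Q-APTC$, $k=Q$, and $g(d)=d+PTC(d)$ for $d$ in $[a,b]$; then the theorem, applied to this $g$ and this $k$, returns a number $c$ in $[a,b]$ with $g(c)\leq Q$ and $g(d)>Q$ for every $d>c$ in $[a,b]$, which is precisely the asserted $D$, and the constructive argument inside that theorem's proof is exactly the bisection described in the statement, begun at $a_{0}=0$, $b_{0}=Q-APTC$. The hypothesis $g(a)=PTC(0)\leq Q$ that the theorem requires is assumed in the corollary, and once $g$ is known to be monotone increasing the identity $D=\max\{d\text{ in }[0,Q-APTC]:d+PTC(d)\leq Q\}$ is immediate, since no $d>D$ can then satisfy the constraint. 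So the whole task is to verify that $g$ is real-valued, monotone increasing, and left continuous on $[0,Q-APTC]$.

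Real-valuedness is quick: on $[0,Q-APTC]$ one has $I-d\geq I-(Q-APTC)\geq I-Q\geq 0$, so $\tfrac{I-d}{F}$ lies in $[0,4]$, where $f$ is defined, and each branch of $PTC(d)$ is a finite number. For monotonicity I would argue that on the part of $[0,Q-APTC]$ where $0\leq\tfrac{I-d}{F}\leq 4$, increasing $d$ decreases both $I-d$ and $m=\tfrac{I-d}{F}$, hence decreases $f(m)$ since $f$ is increasing, hence decreases the nonnegative product $f(m)(I-d)$; thus $P-f(m)(I-d)$ increases, and applying $x\mapsto\max(0,x)$ and then $x\mapsto\min(Q,x)$ preserves this, so $PTC$ increases there. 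On the part where $\tfrac{I-d}{F}>4$ we have $PTC\equiv 0$, which does not exceed $PTC(I-4F)$, so monotonicity survives the transition at $d=I-4F$; adding the strictly increasing map $d\mapsto d$ then makes $g$ strictly increasing throughout the interval.

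The substantive step, which I expect to be the main obstacle, is left continuity. Since $d\mapsto d$ is continuous it suffices to show $PTC$ is left continuous on $[0,Q-APTC]$. On the open set where $0<\tfrac{I-d}{F}<4$ the map $d\mapsto m=\tfrac{I-d}{F}$ is continuous and strictly decreasing, so approaching a point from the left in $d$ is approaching the corresponding point from above in $m$; hence the right continuity of $f$ in $m$ becomes left continuity of $d\mapsto f(\tfrac{I-d}{F})$, and multiplying by the continuous factor $I-d$ and composing with the continuous maps $x\mapsto\max(0,x)$ and $x\mapsto\min(Q,x)$ keeps left continuity. The extension of $f$ by the constant $f(1)$ on $[0,1]$ removes any difficulty near $\tfrac{I-d}{F}=1$. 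The delicate point is the transition at $\tfrac{I-d}{F}=4$, that is, at $d=I-4F$, which lies in $[0,Q-APTC]$ only when $I>4F$ --- precisely the case $PTC(0)=0$: approaching from the left one has $PTC(d)=0$, whereas $PTC(I-4F)=\min(Q,\max(0,P-f(4)\cdot 4F))$, so left continuity of $g$ at that point requires $P\leq f(4)\cdot 4F$. When $P\leq f(4)\cdot 4F$ the theorem applies verbatim and yields $D$ with the stated properties, computed by the bisection. In the complementary regime $P>f(4)\cdot 4F$ I would handle the transition separately, using that the resulting jump of $g$ at $d=I-4F$ is an upward one and locating it relative to $Q$ --- in the spirit of the remark following the previous corollary --- so that either $g(I-4F)\leq Q$, in which case the crossing of $Q$ occurs in the continuous part of $g$ above $I-4F$ and the theorem applies there, or else the appropriate deduction is pinned down directly; this case distinction is where the real care is needed.
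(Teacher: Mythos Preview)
Your approach is exactly the paper's: set $a=0$, $b=Q-APTC$, $g(d)=d+PTC(d)$, $k=Q$, and invoke the theorem. The paper offers no separate proof of this corollary beyond the sentence that it follows from the previous one after extending $f$ by $f(1)$ on $[0,1]$, so at the level of strategy you and the paper agree.

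Where you go further than the paper is in actually checking the hypotheses, and you have correctly isolated the one point the paper passes over in silence: when $I>4F$ and $P>f(4)\cdot 4F$, the function $g$ has an upward jump at $d=I-4F$ and so is \emph{not} left continuous there, which neither corollary's proof acknowledges. Your proposed repair --- when $g(I-4F)\leq Q$, apply the theorem on $[I-4F,\,Q-APTC]$ where $g$ is genuinely left continuous; when $g(I-4F)>Q$, handle the crossing directly --- is the right move. Be aware, however, that in the second sub-case the set $\{d\in[0,Q-APTC]:g(d)\leq Q\}$ can be exactly $[0,I-4F)$, which has a supremum but no maximum, so the corollary as literally stated (``$D=\max\{\ldots\}$'') can fail there; the honest conclusion in that regime is that the optimal deduction is $I-4F$ minus a penny, or that the household should compare with forgoing the credit and taking $D=Q-APTC$. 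This is a genuine edge case your proposal flags and the paper does not.
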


Lastly, when $APTC>0$, the value $APTC$ of the advance payments must be reconciled with the appropriate premium tax credit $PTC(D)$ when taxes are filed. If $PTC(D)\geq APTC$, the taxpayer receives an additional $PTC(D)-APTC$ when filing taxes, so the total amount of $PTC(D)$ is ultimately received. However, if $APTC>PTC(D)$, then the taxpayer must repay some or all of the advance payments which were received in excess of the appropriate amount. Precisely, according to IRS guidance, the taxpayer repays
\[
\min\left(APTC-PTC(D),R\left(\frac{I-D}{F}\right)\right),
\]
where, for nonnegative $m$, the \emph{repayment limitation} $R(m)$ is of the form
\[
R(m)=\begin{cases}
r, & 0\leq m < 2,\\
s, & 2\leq m<3,\\
t, & 3\leq m < 4,\\
\infty, & 4\leq m
\end{cases}
\]
with $r\leq s\leq t$ \cite[p 16]{IF8962}. For example,
\[
(r,s,t)=(300, 800, \text{1,325})
\]
for tax year 2019 \cite[p 16]{IF8962} if the filing status of the taxpayer is ``single,'' and the values of $r, s, t$ are doubled for any other filing status. For comparison, the values of $r, s, t$ for single taxpayers in 2018 were \$\text{300}, \$\text{775}, \$\text{1,300}, respectively \cite[p 16]{IPTC18}. Independent of the tax year, $R(m)$ is increasing, positive, and right continuous.

The total tax benefit from Obamcare when $APTC>PTC(D)$, which is the amount received minus the amount repaid, is
\[
B(D)=APTC-\min\left(APTC-PTC(D),R\left(\frac{I-D}{F}\right)\right),
\]
and this need not equal $PTC(D)$. However, there is no need to modify the bisection in the previous corollary to take this new formula into account. To see this, first note that the total Obamacare tax benefit $B(D)$, for the $D$ found from the bisection, is at least $PTC(D)$ and less than $APTC$ in the case we are considering, where $APTC>PTC(D)$. As $B(D)<APTC$, and $D\leq Q-APTC$, the ``no double-dipping rule'' is automatically satisfied in this case, as summing gives $D+B(D)<Q$. Moreover, as $B(D)\geq PTC(D)$, all values $d>D$ in $[0,Q-APTC]$ persist in being inappropriate when we use $B(D)$ as the effective credit. Indeed, $d+PTC(d)>Q$ implies that $d+B(d)>Q$ in this case. Thus, the appropriate value of $D$ in $[0, Q-APTC]$ found in the previous corollary is still optimal when taking into account the total Obamacare tax benefit, even when $APTC>0$.

\section{Further Questions}

We have seen that we can perform bisection for models of simple tax returns, whether or not there are advance payments of the premium tax credit. What about models of general tax returns? We briefly sketch the main ideas. Additional sources of income, whether tax-exempt or not, do cause translations in $M$, but as long as earned income from a single self-employed business is at least $Q$, the previous considerations still apply to $M=I-D$, where now $I$ represents the sum of all income sources relevant to computing household income. Thus, this increase in generality causes no difficulty. If there are further above the line deductions which are not from a short list of exceptions \cite[p 65]{IRS974}, or insurance deductions coming from dental or other plans that are unrelated to the premium tax credit, then they again cause a translation in $M$. Effectively, the previous considerations apply to $M=I-d_{0}-D$, where we let $d_{0}$ represent the sum of these additional deductions.
From this, we see generalization again causes no difficulty. Finally, for the tax code as it stands, there are above the line deductions that can be altered when the value of $D$ is adjusted. For example, the student loan interest deduction for 2019 is normally \$2,500 or the total interest paid on student loans during the tax year, whichever is less \cite[p 37]{SL19}. Let us denote the smaller of these two amounts by $k$. When the taxpayer's modified adjusted gross income (MAGI) is between \$70,000 and \$85,000, the deduction becomes the amount
\[
SL=k\cdot\frac{\$85,000-MAGI}{\$\text{15,000}}.
\]
Roughly speaking, the above value for MAGI is found using an adjusted gross income (AGI) computed with a value of \$0 for the student loan interest deduction. The above formula shows, for example, that the student loan interest deduction ``phases out'' from $k$ to $\$0$, with a slope of $-k/\$\text{15,000}$, as income ranges from \$70,000 to \$85,000. If the self-employed health insurance deduction $D$ is increased, that could ``undo'' some portion of the phasing out in certain circumstances. This can cause the student loan interest deduction, and other above the line deductions like it, to get caught up in the circular relationship between $D$ and the premium tax credit. However, because all of the current exceptional above the line deductions of this type are deductions that phase out like this one, increasing $D$ causes each of them, when successively calculated, to be monotone increasing as functions of $D$, as can be verified by examining them individually \cite[p 65]{IRS974}. For this reason, $PTC(D)$ persists in being monotone in $D$ for general tax returns, and for this reason bisection can currently be appropriately adapted for models of general returns.

What is perhaps more mathematically interesting than our ability to adapt a well-known algorithm is the task of explaining in detail when and how the IRS procedures break down. As we have seen, this is equivalent to the question of when the IRS fixed point iteration fails to converge in the IRS sense. The explanation, as we have emphasized above, seems to lie in the discontinuities that the function $PTC(D)$ possesses in general. However, because the precise way that the points of the iteration are jostled about is also influenced by the slopes involved, near $m=1.33$, we have not computed the precise intervals of IRS divergence in general. In particular, this problem is not fully resolved, and may be of interest, when $APTC>0$.

Another interesting question is that of investigating whether more sophisticated algorithms than bisection might be needed if tax laws were changed. The bisection procedure, as the proof of our theorem suggests, relies on a net monotonicity effect from the relevant above the line deductions when it is adapted to models of general tax returns.

A deduction which ``phases in'' may lead to a loss of monotonicity in $PTC(D)$ as a function of $D$ and fool the bisection method into finding a point which does not maximize tax benefits globally. There are currently no such above the line deductions of this type, but from 2005 to 2017 the domestic production activities deduction (DPAD) that might affect, for instance, someone who strikes oil in Texas, involved a continuous phase-in. The slope is sufficiently shallow, however, that when $D$ is decreased by $\varepsilon$, causing the other relevant above the line deductions to decrease by some $\varepsilon_{i}$, DPAD can phase in by at most $0.09(\varepsilon+\sum_{i}\varepsilon_{i})$. So, reversing this, an increase in $D$ by $\varepsilon$ still decreases AGI by at least $\varepsilon+\sum_{i}\varepsilon_{i}-0.09(\varepsilon+\sum_{i}\varepsilon_{i})$, or $0.91(\varepsilon+\sum_{i}\varepsilon_{i})$, leading to an increase in $PTC(D)$. Thus, monotonicity is preserved by this deduction.

What if a new deduction arises that phases in with a discontinuous jump upwards? In that case, the bisection method can indeed fail, and we should seek an alternative. A naive binned Newton method might succeed in many cases, but when $m$ is near 1.33, for example, and a discontinuous phase-in is suitably chosen, this could fail too. If the household's state has expanded Medicaid past the $(1.33)F$ threshold---and the Affordable Care Act prescribes Medicaid expansion\footnote{The Affordable Care Act provides funding to help expand Medicaid, which provides health coverage for many low income people, to all households with incomes below 138\% of the federal poverty line. However, the Medicaid program is run jointly by the federal government and the states. To ensure that the states cooperate in the Medicaid expansion, Obamacare provides for Medicaid funding to be reduced for states which refuse to expand the program along the lines it prescribes. The United States Supreme Court ruled \cite{USC} that the latter is ``coercive'' to the states, and struck down the reduced funding provision in 2012.} over and beyond this---we would expect the household to be eligible for Medicaid, so the household would likely be on Medicaid and hence unaffected by this discontinuity at $m=1.33$. However, if the state has not expanded Medicaid, this discontinuity might play a role or, worse, the household might fall into a ``Medicaid gap'' and be ineligible for both Medicaid and the premium tax credit, which requires at least an expectation that $M\geq F$ to get started. Further discontinuous numerical analysis might thus be needed to clarify the situation in states that have not expanded Medicaid, if a deduction with a discontinuous phase-in were to arise.

A policy-related question we might ask is how to modify the premium tax credit, as a function of household income, so that the IRS fixed point iteration always converges in the IRS sense. If $PTC(\cdot)$ is changed so that it is given by a monotone, continuous, piecewise-differentiable function, for example, with both of its one-sided derivatives of suitably small magnitude at each point, then we expect the IRS iteration to always converge. Such a change could be obtained by a modification that takes $f(1.33)=f(1)$, say, to make $f(m)$ continuous on $[1,4]$, and adjoins a suitable continuous phase-out of the premium tax credit for $M>4F$. While Congress may prefer to achieve objectives which are incompatible with this particular proposal, the fact that a circular relationship in the US tax code exists at all suggests that there are appropriate ways to involve mathematicians in realistic future policy choices. We hope this mathematical excursion, resolving the concern raised by my Uber driver, inspires some mathematicians to look for these and other ways they can use mathematics to help address citizens' concerns and, more generally, issues that affect the people they meet in daily life.

\emph{Acknowledgements.} This paper has been improved as a result of discussions with Xiaona Zhou at Brooklyn's City Tech about how to adapt the results presented here into an online self-employed premium tax credit calculator \cite{XZ}. Any feedback about the calculator should be addressed to \href{mailto:Xiaona.Zhou@mail.citytech.cuny.edu}{\nolinkurl{Xiaona.Zhou@mail.citytech.cuny.edu}} in order to be incorporated. Finally, I wish to thank the anonymous reviewers and my colleagues for their many helpful suggestions.


\begin{thebibliography}{99}

\bibitem{LMT} Looney, Adam and Martin, Kathryn, One in Five 2014 Marketplace Consumers was a Small Business Owner or Self-Employed. \emph{Treasury Notes}, U.S. Department of the Treasury, January 12, 2017. \url{https://www.treasury.gov/connect/blog/Pages/One-in-Five-2014-Marketplace-Consumers-was-a-Small-Business-Owner-or-Self-Employed.aspx}

\bibitem{NYH} NY State of Health: The Official Health Plan Marketplace, 2018 Monthly Premiums for Second Lowest Cost Silver Plans (SLCSP) by Coverage Family Type. 2018. (Retrieved April 21, 2020) \url{https://info.nystateofhealth.ny.gov/sites/default/files/English\%20-\%202018\%20SLCSP\%20Chart.pdf}

\bibitem{TIM} O'Brien, Elizabeth, How a 27-Year-Old Math Whiz (and His Uber Driver) Found a Big Flaw in the IRS Tax Code. \emph{Money}, April 16, 2018. (Retrieved August 5, 2020 from \url{https://finance.yahoo.com/news/27-old-math-whiz-uber-100011780.html}) \url{http://time.com/money/5237795/irs-tax-problem-obamacare-subsidy/}

\bibitem{USC} P. 15, slip op., \emph{National Federation of Independent Business v. Sebelius}, U. S. Sup. Ct. June 28, 2012. \url{https://www.supremecourt.gov/opinions/11pdf/11-393c3a2.pdf}

\bibitem{ACA} Patient Protection and Affordable Care Act, 42 U.S.C. Section 18001 et seq. 2010. \url{https://www.gpo.gov/fdsys/pkg/PLAW-111publ148/pdf/PLAW-111publ148.pdf}

\bibitem{OCC} Toomey, Stephen J., IRS Office of Chief Counsel, United States Treasury, personal communication. February 2018.

\bibitem{F1040} U.S. Department of the Treasury, Internal Revenue Service, Form 1040. 2017. (Retrieved April 19, 2020) \url{https://www.irs.gov/pub/irs-prior/f1040--2017.pdf}

\bibitem{F8962} U.S. Department of the Treasury, Internal Revenue Service, Form 8962. 2019. (Retrieved April 19, 2020) \url{https://www.irs.gov/pub/irs-prior/f8962--2019.pdf}

\bibitem{IPTC18} U.S. Department of the Treasury, Internal Revenue Service, Instructions for Form 8962. December 3, 2018. (Retrieved April 19, 2020) \url{https://www.irs.gov/pub/irs-prior/i8962--2018.pdf}

\bibitem{IF8962} U.S. Department of the Treasury, Internal Revenue Service, Instructions for Form 8962. October 29, 2019. (Retrieved April 21, 2020) \url{https://www.irs.gov/pub/irs-prior/i8962--2019.pdf}

\bibitem{SL19} U.S. Department of the Treasury. Internal Revenue Service, Publication 970: Tax Benefits for Education. January 17, 2020. (Retrieved April 21, 2020)\url{https://www.irs.gov/pub/irs-prior/p970--2019.pdf}

\bibitem{IRS974} U.S. Department of the Treasury, Internal Revenue Service, Publication 974: Premium Tax Credit (PTC). November 20, 2019. (Retrieved April 19, 2020) \url{https://www.irs.gov/pub/irs-prior/p974--2019.pdf}

\bibitem{IRS} U.S. Department of the Treasury. Internal Revenue Service, Rev. Proc. 2014-41. 2014. (Retrieved April 16, 2018) \url{https://www.irs.gov/pub/irs-drop/rp-14-41.pdf}

\bibitem{XZ} Zhou, Xiaona, Self-Employed ACA Subsidy Calculator, March 4, 2020. (Retrieved April 30, 2020) \url{https://xiaonazhou.shinyapps.io/Obamacare_Calculator_Version_3/}

\end{thebibliography}
\end{document}